\newtheorem{theorem}{Theorem}
\newtheorem{lemma}[theorem]{Lemma}
\newcommand{\bi}{\begin{itemize}}
\newcommand{\ei}{\end{itemize}}
\newcommand{\be}{\begin{eqnarray}}
\newcommand{\ee}{\end{eqnarray}}
\newcommand{\mx}[1]{\mathbf{\bm{#1}}} 
\newcommand{\vc}[1]{\mathbf{\bm{#1}}} 
\newcommand{\Tr}[1]{\text{Tr} \left[ #1 \right]} 
\DeclareMathOperator{\diag}{diag}
\renewcommand{\footnoterule}{%
	\kern -3pt
	\hrule width \columnwidth height 0.6pt
	\kern 2pt
}
\renewcommand\@makefntext[1]{\leftskip=0em\hskip-0em\@makefnmark#1}
\title{Optimal Power Allocation for Distributed BLUE Estimation\\
	with Linear Spatial Collaboration}
\name{Mohammad~Fanaei$^{\star}$, Matthew~C.~Valenti$^{\star}$, Abbas Jamalipour$^{\dagger}$, and Natalia~A.~Schmid$^{\star}$
	\thanks{This work was supported in part by the Office of Naval Research under Award No.~N00014--09--1--1189.
	The work of M.~Fanaei is sponsored in part by the National Science Foundation under Award No.~IIA--1317103.
	}
}
\address{$^{\star}$Department of Computer Science and Electrical Engineering \\
	West Virginia University, Morgantown, WV, U.S.A. \\[5pt]
	$^{\dagger}$School of Electrical and Information Engineering \\
	University of Sydney, NSW, Australia
	}
\begin{document}
\ninept		
\maketitle

\begin{abstract}
This paper investigates the problem of linear spatial collaboration for distributed estimation in wireless sensor networks. In this context, the sensors share their local noisy (and potentially spatially correlated) observations with each other through error-free, low cost links based on a pattern defined by an adjacency matrix. Each sensor connected to a central entity, known as the fusion center (FC), forms a linear combination of the observations to which it has access and sends the resulting signal to the FC through an orthogonal fading channel. The FC combines these received signals to find the best linear unbiased estimator of the vector of unknown signals observed by individual sensors. The main novelty of this paper is the derivation of an optimal power-allocation scheme in which the coefficients used to form linear combinations of noisy observations at the sensors connected to the FC are optimized. Through this optimization, the total estimation distortion at the FC is minimized, given a constraint on the maximum cumulative transmit power in the entire network. Numerical results show that even with a moderate connectivity across the network, spatial collaboration among sensors significantly reduces the estimation distortion at the FC.
\end{abstract}

\begin{keywords}
Distributed linear unbiased estimation, BLUE estimator, linear spatial collaboration, power allocation, fusion center, wireless sensor networks.
\end{keywords}

\section{Introduction}
\label{Sec:Intro}
One of the main applications of wireless sensor networks (WSNs) is {\em distributed estimation} in which spatially distributed sensors make noisy observations of (potentially correlated) signals, process their observations locally, and transmit their processed data to a central entity, known as the fusion center (FC), through communication channels corrupted by fading and additive noise. The FC will then combine the received signals to estimate either individual signals observed by local sensors or a parameter correlated with them. In most studies in the literature, it is assumed that the sensors do not communicate and/or collaborate with each other, and that the local processing is performed only on each sensor's observed noisy signal~\cite{Bahceci08,Fang09,Rashid12,Xiao06,Cui07Diversity,Fanaei2013Asilomar,Xiao08,Banavar10,Fanaei2013Milcom,Chaudhary13,Xiao05,Fanaei2012,RibeiroGiannakis06a,Ishwar2005}.
In this paper, we investigate the problem of distributed estimation under the assumption that local sensors collaborate with each other by sharing their local noisy observations. Consequently, the processing at each sensor connected to the FC will be performed on the combination of the sensor's own observations and those of the other sensors to which it has access.

Bah\c{c}eci~and~Khandani~\cite{Bahceci08} have studied a WSN in which local sensors make noisy observations of correlated Gaussian signals. They have assumed that each sensor amplifies its own local noisy observation before sending it to the FC through orthogonal channels corrupted by Rayleigh flat fading and additive white Gaussian noise. The FC will then combine the received signals from spatially distributed sensors to estimate the set of correlated signals observed by local sensors, either using the best linear unbiased estimator (BLUE) or the minimum mean squared-error estimator (MMSE). They have derived the optimal power-allocation scheme that minimizes the total cumulative transmit power in the entire network, given a constraint on the maximum estimation distortion at the FC, measured either as the estimation variance of each individual signal observed by one of the sensors, or as the average estimation variance of all signals of interest. It is crucial to emphasize that~\cite{Bahceci08} assumes that there is no communication and/or collaboration among sensors.

Kar~and~Varshney~\cite{Kar13} have studied the optimal power allocation for a WSN in which sensors collaborate with each other by sharing their local noisy observations. To the best of our knowledge, this is the first work that has considered sensor collaboration in the context of distributed estimation. In the system model studied in~\cite{Kar13}, each sensor connected to the FC, which in general could be in a subset of all sensors, forms a linear combination of its own noisy observation and the observations of other sensors to which it has access. This operation is known as the {\em linear spatial collaboration}. The sensor will then send the resulting linearly processed signal to the FC through a coherent multiple access channel~(MAC). The FC will find the linear minimum mean squared-error estimator (LMMSE) of a {\em scalar} random signal observed by spatially distributed sensors. The gains used to form the linear combinations at local sensors are optimized to minimize the LMMSE distortion, given a constraint on the maximum per-sensor or cumulative transmit power in the network. The results of their investigations show that even a moderate connectivity in the WSN could drastically reduce the estimation distortion at the FC.

As the system model of the WSN described in Section~\ref{Sec:SystemModel} shows, our goal in this paper is to generalize the network model studied in~\cite{Bahceci08} by assuming that $(a)$ the observation noises and channel noises are spatially correlated, $(b)$ a subset of sensors is {\em not} directly connected to the FC, and more importantly, $(c)$ the sensors collaborate with each other by sharing their local noisy observations through error-free, low cost links. The most important aspect of this network model is the linear spatial collaboration among local sensors. Furthermore, we will study a generalized version of the problem investigated in~\cite{Kar13}. In contrast with~\cite{Kar13}, the FC in our system model estimates the individual signals observed by distributed local sensors and {\em not} just an underlying scalar parameter that is collaboratively observed by the entire network. Moreover, we consider the communication channels between the connected sensors and the FC to be orthogonal rather than a coherent MAC. Another contribution of our work is that the FC finds the BLUE estimator of the vector of unknown signals observed by local sensors rather than the LMMSE estimator. Note that unlike the LMMSE estimator, which depends on the statistics of the signals being observed and estimated, the BLUE estimator is independent of the source statistics and is useful when the information about the signals to be estimated is limited.

As the problem formulated in Section~\ref{Sec:Analysis} and its proposed solution for the above linear spatial collaboration show, we will derive the optimal power-allocation scheme or equivalently, the optimal set of the weights used to form linear combinations of shared observations at each sensor connected to the FC. The goal of of this optimization approach is to minimize the sum of the estimation variances of the BLUE estimators for different signals observed by local sensors, given a constraint on the average cumulative transmit power in the entire network. The numerical results provided in Section~\ref{Sec:NumResults} show the applicability and effectiveness of the proposed scheme.

\section{System Model}
\label{Sec:SystemModel}
Consider a wireless sensor network (WSN) composed of $K$ spatially distributed sensors, each one of which observes a noisy version of a local signal of interest as shown in Fig.~\ref{Fig:SystemModel}. Assume that $M \leq K$ sensors are connected to a fusion center (FC). Using the received faded and noisy versions of locally processed sensor observations from a subset of sensors that are connected to it, the FC tries to find the best linear unbiased estimate (BLUE) of the vector of signals observed by individual sensors.\footnote{It is assumed that the observations of each sensor are communicated to the FC by itself if it is connected to the FC, by a subset of connected sensors to the FC with which it shares its observations if it is {\em not directly} connected to the FC, or by both.} Note that one of the major differences between the system considered here and most of the studies in the literature is that in our model (similar to~\cite{Bahceci08}), the FC estimates the individual signals observed by local sensors rather than combining the observations to estimate a set of the underlying parameters that are correlated with the collection of local observations.

\setlength{\belowcaptionskip}{-7pt}
\begin{figure}[!t]
	\centering
	\includegraphics[width=1.00\linewidth]{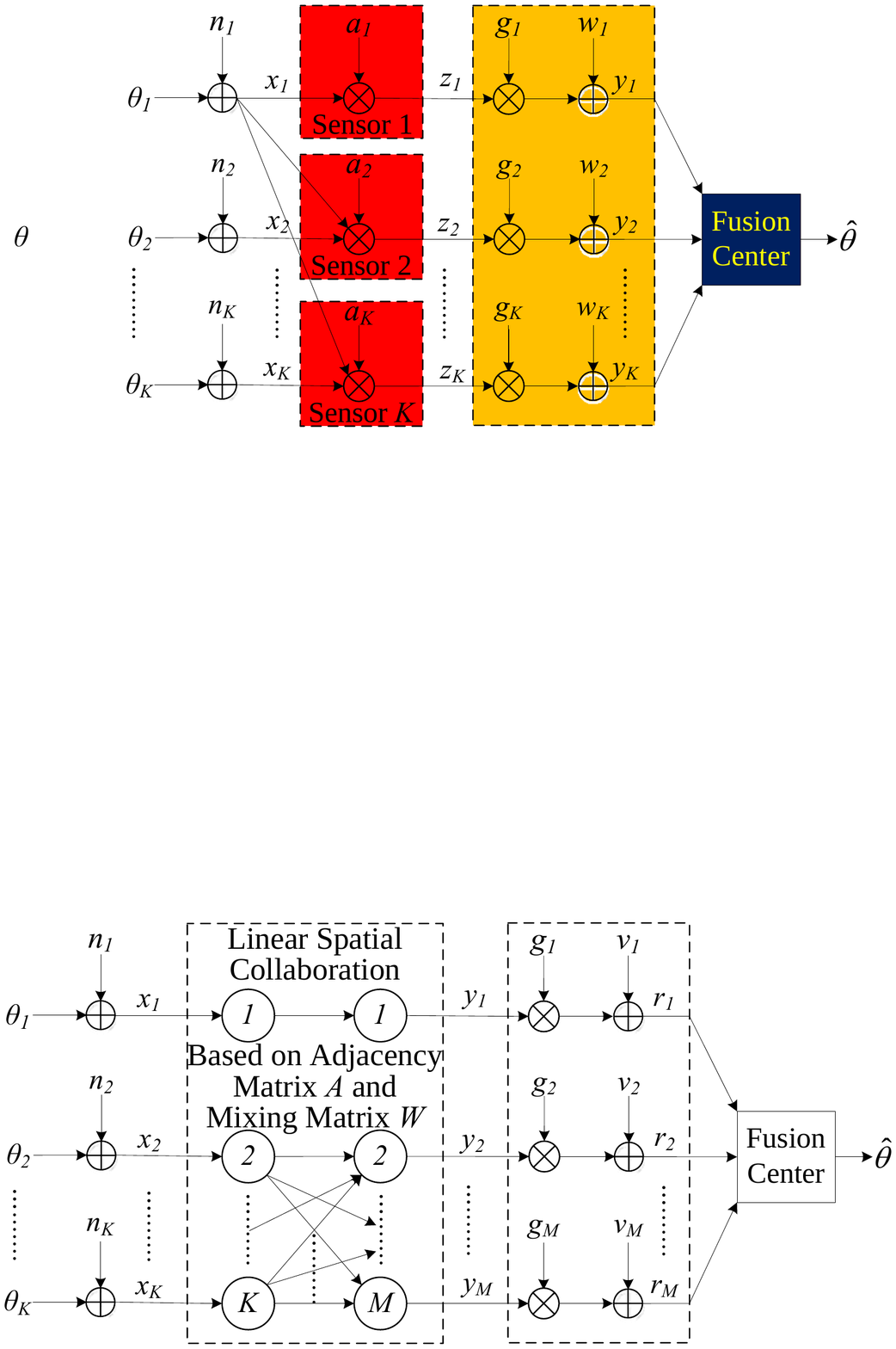}
	\caption{System model of a WSN with error-free inter-sensor collaboration in which the FC finds an estimate of $\vc{\theta} \triangleq \left[ \theta_1, \theta_2, \dotsc, \theta_K \right]^T$.
	}
	\label{Fig:SystemModel}
\end{figure}

Suppose that each sensor makes a noisy observation of a local signal of interest as
\be
x_i
&{}={}&
\theta_i + n_i,
\qquad \qquad
i = 1, 2, \dotsc, K,
\ee
where $x_i$ is the local noisy observation at the $i$th sensor, $\theta_i$ is the local unknown signal to be estimated at the FC, and $n_i$ is the observation noise. Assume that the {\em random} vector of signals observed at different sensors $\vc{\theta} \triangleq \left[ \theta_1, \theta_2, \dotsc, \theta_K \right]^T$ has zero mean and is spatially correlated with the known auto-correlation matrix $\mx{R}_\vc{\theta} \triangleq \mathbb{E} \left[ \vc{\theta} \vc{\theta}^T \right]$, where $\left( \cdot \right)^T$ represents the vector/matrix transpose operation and $\mathbb{E} \left[ \cdot \right]$ denotes the expectation operation. Furthermore, assume that the vector of observation noises $\vc{n} \triangleq \left[ n_1, n_2, \dotsc, n_K \right]^T$ is Gaussian with zero mean and $\mx{R}_\vc{n} \triangleq \mathbb{E} \left[ \vc{n} \vc{n}^T \right]$ as its auto-correlation matrix, i.e., $\vc{n} \sim \mathcal{N} \left( \vc{0}, \mx{R}_\vc{n} \right)$. It is assumed that the random vectors $\vc{\theta}$ and $\vc{n}$ are independent.

With the exception of~\cite{Kar13}, most studies in the literature assume that there is no inter-sensor communication and/or collaboration, and that each sensor processes only its own local noisy observation before transmitting it to the FC. In this paper, we assume that the sensors share their observations with each other through low cost, error-free links.\footnote{If the distance between sensors is a lot smaller than the distance between sensors and the FC, we could ignore the transmission cost of inter-sensor communications.} Suppose that the inter-sensor connectivity is modeled by an $M$-by-$K$ {\em adjacency matrix} $\mx{A}$, whose elements are either zero or one. If $\mx{A}_{j,i} = 1$, then sensor $j$ has access to the local observation of sensor $i$ through a low cost link. Otherwise, $\mx{A}_{j,i} = 0$. Note that in general, the adjacency matrix $\mx{A}$ is not necessarily symmetric since a sensor may receive the observations of a subset of other sensors, but it may not share its own observations with them. Moreover, $\mx{A}_{j,j} = 1$, $j = 1,2,\dotsc, M$, as each sensor has access to its own local observations.

Suppose that the sensors are sorted so that the first $M$ sensors are connected to the FC. Each connected sensor to the FC uses an amplify-and-forward strategy and forms a linear combination of all local observations to which it has access as
\be
y_j
&{}={}&
\sum_{\substack{
		i = 1\\
		\mx{A}_{j,i} = 1}}^K w_{j,i} x_i,
\qquad \qquad
j = 1, 2, \dotsc, M,
\ee
where $y_j$ is the transmitted signal by the $j$th sensor, and $w_{j,i}$ is the weight of the $i$th observation in the linear combination that sensor $j$ forms to be transmitted to the FC. Note that the above analog local processing can be rewritten in a vector form as $\vc{y} = \mx{W} \vc{x}$,
where $\vc{y} \triangleq \left[ y_1, y_2, \dotsc, y_M \right]^T$ is the vector of transmitted signals from the sensors that are connected to the FC, $\vc{x} \triangleq \left[ x_1, x_2, \dotsc, x_K \right]^T$ is the vector of local noisy observations, and $\mx{W}$ is an $M$-by-$K$ {\em mixing matrix}. It could easily be seen that $\mx{W}_{j,i} = 0$ if $\mx{A}_{j,i} = 0$, and $\mx{W}_{j,i} = w_{j,i}$ if $\mx{A}_{j,i} = 1$. Note that the average cumulative transmit power of the entire network can be found as
\be
P_{\mathsf{Total}}
& = &
\mathbb{E} \left[ \vc{y}^T \vc{y} \right]
\; = \;
\mathbb{E} \left[ \vc{x}^T \mx{W}^T \mx{W} \vc{x} \right] \notag \\
& = &
\Tr{ \mathbb{E} \left[ \vc{y} \vc{y}^T \right] }
\; = \;
\Tr{\mx{W} \left( \mx{R}_\vc{\theta} + \mx{R}_\vc{n} \right) \mx{W}^T},
\ee
where $\Tr{\cdot}$ denotes the trace operation of a square matrix. Therefore, the choice of the mixing matrix $\mx{W}$ affects the average cumulative transmit power of the network. Hence, determining the mixing matrix $\mx{W}$ could be considered a power-allocation strategy.

The channel between each sensor and the FC is assumed to be corrupted by fading and additive Gaussian noise. The received signal from sensor $j$ at the FC is modeled as
\be
r_j
&{}={}&
g_j y_j + v_j
,
\qquad \qquad
j = 1, 2, \dotsc, M,
\ee
where $g_j$ is the spatially independent fading coefficient of the channel between sensor $j$ and the FC, and $v_j$ is the channel noise. Note that the channels are assumed to be {\em orthogonal}. Suppose that the vector of channel noises $\vc{v} \triangleq \left[ v_1, v_2, \dotsc, v_M \right]^T$ is Gaussian with zero mean and $\mx{R}_\vc{v} \triangleq \mathbb{E} \left[ \vc{v} \vc{v}^T \right]$ as its auto-correlation matrix, i.e., $\vc{v} \sim \mathcal{N} \left( \vc{0}, \mx{R}_\vc{v} \right)$. The above model for the communication channels between local sensors and the FC could be rewritten in a vector form as
\be \label{Eq:ReceivedR}
\vc{r}
\; = \;
\mx{G} \vc{y} + \vc{v}
\; = \;
\mx{G} \mx{W} \vc{x} + \vc{v}
\; = \;
\mx{G} \mx{W} \vc{\theta} + \mx{G} \mx{W} \vc{n} + \vc{v},
\ee
where $\vc{r} \triangleq \left[ r_1, r_2, \dotsc, r_M \right]^T$ is the vector of the received signals from local sensors at the FC, and $\mx{G} = \diag \left(  g_1, g_2, \dotsc, g_M \right)$ is a diagonal $M$-by-$M$ matrix, whose $m$th diagonal element is the fading coefficient of the channel between sensor $m$ and the FC. In this paper, we assume that the FC has {\em perfect} knowledge of the {\em instantaneous} fading coefficients of the channels between local sensors and itself. This requirement could be satisfied by, for example, using pilot signals.

\section{Optimal Power Allocation for Linear Spatial Collaboration}
\label{Sec:Analysis}
It can be seen from~\eqref{Eq:ReceivedR} that, due to the independence of $\vc{n}$ and $\vc{v}$, given a realization of the vector of locally observed signals $\vc{\theta}$ and a realization of the fading coefficients of the channels between local sensors and the FC, the received vector of signals at the FC is a Gaussian random vector with mean $\vc{\mu}_{\vc{r} | \left\{ \vc{\theta}, \mx{G} \right\}} = \mx{G} \mx{W} \vc{\theta}$ and covariance matrix $\mx{R}_{\vc{r} | \left\{ \vc{\theta}, \mx{G} \right\}} = \mx{G} \mx{W} \mx{R}_\vc{n} \mx{W}^T \mx{G}^T + \mx{R}_\vc{v}$. In other words,
\be
\vc{r} \big| \left\{ \vc{\theta}, \mx{G} \right\}
\, \sim \,
\mathcal{N} \left( \mx{G} \mx{W} \vc{\theta}, \mx{G} \mx{W} \mx{R}_\vc{n} \mx{W}^T \mx{G}^T + \mx{R}_\vc{v} \right).
\ee
Upon receiving the faded and noisy version of the vector of locally processed observations, the FC finds the BLUE estimator for the vector of observed signals $\vc{\theta}$ as follows~\cite[Chapter 6]{Kay93}:
\ifbool{EqOneColumn}
{
	\be
	\widehat{\vc{\theta}}
	&{}={}&
	\left(
	\mx{W}^T \mx{G}^T
	\left( \mx{G} \mx{W} \mx{R}_\vc{n} \mx{W}^T \mx{G}^T + \mx{R}_\vc{v} \right)^{-1}
	\mx{G} \mx{W}
	\right)^{-1}
	\mx{W}^T \mx{G}^T
	\left( \mx{G} \mx{W} \mx{R}_\vc{n} \mx{W}^T \mx{G}^T + \mx{R}_\vc{v} \right)^{-1}
	\vc{r},
	\ee
}
{
	\begin{multline}
	\widehat{\vc{\theta}}
	\; = \;
	\left(
	\mx{W}^T \mx{G}^T
	\left( \mx{G} \mx{W} \mx{R}_\vc{n} \mx{W}^T \mx{G}^T + \mx{R}_\vc{v} \right)^{-1}
	\mx{G} \mx{W}
	\right)^{-1}\\
	\mx{W}^T \mx{G}^T
	\left( \mx{G} \mx{W} \mx{R}_\vc{n} \mx{W}^T \mx{G}^T + \mx{R}_\vc{v} \right)^{-1}
	\vc{r},
	\end{multline}
}
where the corresponding covariance matrix of the BLUE estimator can be found as
\be \label{Eq:EstimateCovarianceMat}
\mx{R}_{\widehat{\vc{\theta}}}
&{}={}&
\mathbb{E}
\left[
\left( \widehat{\vc{\theta}} - \vc{\theta} \right) \left( \widehat{\vc{\theta}} - \vc{\theta} \right)^T
\right] \notag \\
&{}={}&
\left(
\mx{W}^T \mx{G}^T
\left( \mx{G} \mx{W} \mx{R}_\vc{n} \mx{W}^T \mx{G}^T + \mx{R}_\vc{v} \right)^{-1}
\mx{G} \mx{W}
\right)^{-1}
\ee
Note that the calculation of the BLUE estimator is independent of the statistics of the signal to be estimated $\vc{\theta}$. As it can readily be observed from~\eqref{Eq:EstimateCovarianceMat}, the choice of the mixing matrix $\mx{W}$ affects the estimation distortion at the FC, which can be defined based on the given covariance matrix of the BLUE estimator.

The goal of this paper is to derive the optimal mixing matrix $\mx{W}$ that minimizes the total distortion in the estimation of $\vc{\theta}$ at the FC, given a constraint on the average cumulative transmit power of local sensors. We define the total estimation distortion at the FC as the trace of the covariance matrix of the BLUE estimator, which is the sum of the estimation variances for different components of $\vc{\theta}$.
This objective could be formulated as the following optimization problem:
{\setlength{\arraycolsep}{0pt}
\be \label{Eq:OptProbVer1}
\begin{aligned}
	& \underset{\mx{W}}{\text{minimize}}
	& &
	\Tr{\mx{W}^T \mx{G}^T
		\left( \mx{G} \mx{W} \mx{R}_\vc{n} \mx{W}^T \mx{G}^T + \mx{R}_\vc{v} \right)^{-1}
		\mx{G} \mx{W}
	}^{-1} \\
	& \text{subject to}
	& &
	\Tr{ \mx{W} \left( \mx{R}_\vc{\theta} + \mx{R}_\vc{n} \right) \mx{W}^T }
	\, \leq \,
	P_0
\end{aligned}
\ee
}
\\[-8pt]
\noindent where $P_0$ is the constraint on the total average transmit power in the entire network. The following lemma could be used to simplify the objective function of the above constrained optimization problem.

\begin{lemma} \label{Lemma:LowerBoundTrR}
A lower bound on { \normalfont $\Tr{\mx{R}_{\widehat{\vc{\theta}}}}$ } can be found as
{\normalfont
\be
\Tr{\mx{R}_{\widehat{\vc{\theta}}}}
\geq
\frac{K^2}{
	\Tr{\mx{W}^T \mx{G}^T
		\left( \mx{G} \mx{W} \mx{R}_\vc{n} \mx{W}^T \mx{G}^T + \mx{R}_\vc{v} \right)^{-1}
		\mx{G} \mx{W}
		}}.
\ee
}
\end{lemma}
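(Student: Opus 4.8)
The plan is to observe that the matrix sitting in the denominator of the claimed bound is exactly the one whose inverse defines the BLUE covariance in~\eqref{Eq:EstimateCovarianceMat}. Introducing the shorthand
\[
\mx{M}
\, \triangleq \,
\mx{W}^T \mx{G}^T
\left( \mx{G} \mx{W} \mx{R}_\vc{n} \mx{W}^T \mx{G}^T + \mx{R}_\vc{v} \right)^{-1}
\mx{G} \mx{W},
\]
which is a $K$-by-$K$ matrix, the covariance expression reads $\mx{R}_{\widehat{\vc{\theta}}} = \mx{M}^{-1}$. Hence $\Tr{\mx{R}_{\widehat{\vc{\theta}}}} = \Tr{\mx{M}^{-1}}$ and the denominator is $\Tr{\mx{M}}$, so the asserted inequality is equivalent to the purely scalar statement $\Tr{\mx{M}^{-1}} \, \Tr{\mx{M}} \geq K^2$.

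First I would confirm that $\mx{M}$ is a symmetric positive-definite matrix, so that both traces are well defined and strictly positive and an eigen-decomposition is available. Symmetry is immediate because $\mx{R}_\vc{n}$ and $\mx{R}_\vc{v}$ are covariance matrices and the middle factor is therefore symmetric; positive definiteness (hence invertibility) is precisely the condition under which the inverse in~\eqref{Eq:EstimateCovarianceMat} exists, and is assumed throughout so that the BLUE estimator and its covariance are well defined.

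Next I would diagonalize. Letting $\lambda_1, \dotsc, \lambda_K > 0$ denote the eigenvalues of $\mx{M}$, the trace is invariant under the orthogonal similarity that diagonalizes $\mx{M}$, so $\Tr{\mx{M}} = \sum_{k=1}^K \lambda_k$ and $\Tr{\mx{M}^{-1}} = \sum_{k=1}^K \lambda_k^{-1}$. Applying the Cauchy--Schwarz inequality to the vectors with components $\sqrt{\lambda_k}$ and $1/\sqrt{\lambda_k}$ then gives
\[
K^2
\, = \,
\left( \sum_{k=1}^K \sqrt{\lambda_k}\, \frac{1}{\sqrt{\lambda_k}} \right)^2
\, \leq \,
\left( \sum_{k=1}^K \lambda_k \right)
\left( \sum_{k=1}^K \frac{1}{\lambda_k} \right)
\, = \,
\Tr{\mx{M}} \, \Tr{\mx{M}^{-1}},
\]
and dividing by the (positive) quantity $\Tr{\mx{M}}$ yields the stated lower bound. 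Equivalently, this is the arithmetic--harmonic-mean inequality applied to the $\lambda_k$.

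I do not anticipate a genuine obstacle here: once the problem is reduced to $\Tr{\mx{M}}\,\Tr{\mx{M}^{-1}} \geq K^2$, the argument is a one-line application of Cauchy--Schwarz on the eigenvalues. The only point deserving care is the positive-definiteness of $\mx{M}$, which guarantees $\lambda_k > 0$ and hence the validity of the eigenvalue sums; but this is inherited directly from the assumptions that make the BLUE covariance~\eqref{Eq:EstimateCovarianceMat} meaningful.
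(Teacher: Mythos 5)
Your proof is correct and, at its core, is the same as the paper's: both reduce the claim to the trace inequality $\Tr{\mx{\Lambda}}\,\Tr{\mx{\Lambda}^{-1}} \geq K^2$ applied to $\mx{\Lambda} = \mx{R}_{\widehat{\vc{\theta}}}$. The only difference is that the paper obtains this by citing the more general inequality $\Tr{\mx{\Phi}^T \mx{\Lambda}^{-1} \mx{\Phi}} \geq \left(\Tr{\mx{\Phi}^T \mx{\Phi}}\right)^2 / \Tr{\mx{\Phi}^T \mx{\Lambda} \mx{\Phi}}$ from a reference and specializing $\mx{\Phi} = \mx{I}_K$, whereas you prove the needed special case directly and self-containedly via Cauchy--Schwarz on the eigenvalues.
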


\begin{proof}
It is proved in~\cite[Lemma 1]{Fang09} that for any arbitrary real matrix $\mx{\Phi}$ and any positive semi-definite real matrix $\mx{\Lambda}$ of proper sizes, the following inequality holds:
\be
\Tr{\mx{\Phi}^T \mx{\Lambda}^{-1} \mx{\Phi}}
& \geq &
\frac{\left(\Tr{\mx{\Phi}^T \mx{\Phi}}\right)^2}{\Tr{\mx{\Phi}^T \mx{\Lambda} \mx{\Phi}}}.
\ee
Let $\mx{\Phi} = \mx{I}_K$ and $\mx{\Lambda} = \mx{R}_{\widehat{\vc{\theta}}}$, where $\mx{I}_K$ denotes the $K$-by-$K$ identity matrix. The lower bound of the lemma would be readily derived.
\end{proof}

Using Lemma~\ref{Lemma:LowerBoundTrR}, the optimization problem of~\eqref{Eq:OptProbVer1} can be rewritten as follows:
\be \label{Eq:OptProbVer2}
\begin{aligned}
	& \underset{\mx{W}}{\text{maximize}}
	& &
	\Tr{\mx{W}^T \mx{G}^T
		\left( \mx{G} \mx{W} \mx{R}_\vc{n} \mx{W}^T \mx{G}^T + \mx{R}_\vc{v} \right)^{-1}
		\mx{G} \mx{W}
	} \\
	& \text{subject to}
	& &
	\Tr{ \mx{W} \left( \mx{R}_\vc{\theta} + \mx{R}_\vc{n} \right) \mx{W}^T }
	\, \leq \,
	P_0
\end{aligned}
\ee

\begin{lemma} \label{Lemma:ReformulateSchur}
The optimization problem given in~\eqref{Eq:OptProbVer2} is equivalent to the following form:
{\normalfont
\be \label{Eq:OptProbVer4}
\begin{aligned}
	& \underset{\mx{W},\gamma,\mx{\Gamma}}{\text{minimize}}
	& &
	\gamma \\
	& \text{subject to}
	& &
	\Tr{ \mx{W} \left( \mx{R}_\vc{\theta} + \mx{R}_\vc{n} \right) \mx{W}^T }
	\, \leq \,
	P_0 \\
	& & &
	\begin{pmatrix}
		\mx{\Gamma} & \mx{R}_\vc{n}^{-1} \\
		\mx{R}_\vc{n}^{-1} &  \mx{W}^T \mx{G}^T \mx{R}_\vc{v}^{-1} \mx{G} \mx{W} + \mx{R}_\vc{n}^{-1}
	\end{pmatrix}
	\succeq 0 \\
	& & &
	\Tr{\mx{\Gamma}} \leq \gamma
\end{aligned}
\ee
}
\\[-8pt]
\noindent where {\normalfont $\gamma$} is a real scalar, {\normalfont $\mx{\Gamma}$} is a symmetric $K$-by-$K$ real matrix, and $\mx{\Upsilon} \succeq 0$ denotes that the matrix $\mx{\Upsilon}$ is positive semi-definite.
\end{lemma}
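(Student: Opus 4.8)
The plan is to bridge the objective of~\eqref{Eq:OptProbVer2} and the semidefinite constraints of~\eqref{Eq:OptProbVer4} using two standard tools: the matrix inversion lemma to recast the objective in \emph{information form}, and the Schur-complement characterization of positive semidefiniteness to handle the block-matrix constraint.

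First I would rewrite the objective. Applying the Woodbury identity to the inverse $\left(\mx{R}_\vc{n}^{-1} + \mx{W}^T\mx{G}^T\mx{R}_\vc{v}^{-1}\mx{G}\mx{W}\right)^{-1}$ (equivalently, treating $\mx{G}\mx{W}$ as an observation matrix with noise covariances $\mx{R}_\vc{n}$ and $\mx{R}_\vc{v}$), one obtains
\[
\mx{W}^T\mx{G}^T\left(\mx{G}\mx{W}\mx{R}_\vc{n}\mx{W}^T\mx{G}^T + \mx{R}_\vc{v}\right)^{-1}\mx{G}\mx{W} = \mx{R}_\vc{n}^{-1} - \mx{R}_\vc{n}^{-1}\left(\mx{R}_\vc{n}^{-1} + \mx{W}^T\mx{G}^T\mx{R}_\vc{v}^{-1}\mx{G}\mx{W}\right)^{-1}\mx{R}_\vc{n}^{-1}.
\]
Taking the trace and observing that $\Tr{\mx{R}_\vc{n}^{-1}}$ does not depend on $\mx{W}$, maximizing the objective of~\eqref{Eq:OptProbVer2} is equivalent to \emph{minimizing} $\Tr{\mx{R}_\vc{n}^{-1}\left(\mx{R}_\vc{n}^{-1} + \mx{W}^T\mx{G}^T\mx{R}_\vc{v}^{-1}\mx{G}\mx{W}\right)^{-1}\mx{R}_\vc{n}^{-1}}$ over $\mx{W}$ under the same power constraint.

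Next I would reformulate this minimization as the semidefinite program in~\eqref{Eq:OptProbVer4}. The bottom-right block $\mx{W}^T\mx{G}^T\mx{R}_\vc{v}^{-1}\mx{G}\mx{W} + \mx{R}_\vc{n}^{-1}$ is positive definite, being the sum of a positive semidefinite matrix and the positive-definite $\mx{R}_\vc{n}^{-1}$; hence the Schur-complement criterion applies, and the block inequality in~\eqref{Eq:OptProbVer4} holds if and only if $\mx{\Gamma} \succeq \mx{R}_\vc{n}^{-1}\left(\mx{W}^T\mx{G}^T\mx{R}_\vc{v}^{-1}\mx{G}\mx{W} + \mx{R}_\vc{n}^{-1}\right)^{-1}\mx{R}_\vc{n}^{-1}$. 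The right-hand side is precisely the matrix whose trace I isolated in the first step. Because the trace is monotone with respect to the positive-semidefinite order, this inequality together with $\Tr{\mx{\Gamma}} \leq \gamma$ forces $\gamma$ to be at least that trace, with equality attained by setting $\mx{\Gamma}$ equal to the right-hand side and $\gamma$ equal to its trace. Minimizing $\gamma$ over $(\mx{\Gamma},\gamma)$ for each fixed $\mx{W}$ therefore returns exactly the objective from the first step, and minimizing further over $\mx{W}$ subject to the shared power constraint reproduces the equivalent minimization, establishing the claimed equivalence with~\eqref{Eq:OptProbVer2}.

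I expect the main obstacle to be the first step: correctly applying the matrix inversion lemma so that the covariance-form trace collapses into the information-form expression that matches the Schur complement. The Woodbury manipulation must be carried out so that \emph{both} outer factors become $\mx{R}_\vc{n}^{-1}$ (which relies on $\mx{R}_\vc{n}$ and $\mx{R}_\vc{v}$ being invertible), since it is exactly this symmetric sandwiched form that the block constraint is engineered to reproduce via its Schur complement. Once that identity is verified, the Schur-complement reformulation and the epigraph/trace-monotonicity argument are routine.
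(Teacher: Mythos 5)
Your proposal is correct and follows essentially the same route as the paper: the Woodbury identity to convert the objective into the form $\mx{R}_\vc{n}^{-1} - \mx{R}_\vc{n}^{-1}\left(\mx{W}^T\mx{G}^T\mx{R}_\vc{v}^{-1}\mx{G}\mx{W} + \mx{R}_\vc{n}^{-1}\right)^{-1}\mx{R}_\vc{n}^{-1}$, dropping the constant $\Tr{\mx{R}_\vc{n}^{-1}}$ to turn the maximization into a minimization, and then the Schur complement plus an epigraph variable $(\gamma,\mx{\Gamma})$ to obtain the block linear matrix inequality. Your treatment of the epigraph step (trace monotonicity under the semidefinite order and attainment of equality) is in fact slightly more explicit than the paper's, which simply asserts the existence of a suitable $\mx{\Gamma}$ with a citation.
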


\begin{proof}
Based on the 
Woodbury matrix inversion lemma~\cite[Page 19]{HornJohnson91}, for any arbitrary matrix $\mx{\Upsilon}$ and any non-singular matrices $\mx{\Phi}$ and $\mx{\Lambda}$ of proper sizes, if the matrix $\mx{\Phi} + \mx{\Upsilon} \mx{\Lambda} \mx{\Upsilon}^{T}$ is non-singular, then the following identity holds:
\ifbool{EqOneColumn}
{
\be
\left( \mx{\Phi} + \mx{\Upsilon} \mx{\Lambda} \mx{\Upsilon}^{T} \right)^{-1}
& = & 
\mx{\Phi}^{-1} -
\mx{\Phi}^{-1} \mx{\Upsilon}
\left( \mx{\Lambda}^{-1} + \mx{\Upsilon}^T \mx{\Phi}^{-1} \mx{\Upsilon} \right)^{-1}
\mx{\Upsilon}^T \mx{\Phi}^{-1}.
\ee
}
{
\begin{multline}
\left( \mx{\Phi} + \mx{\Upsilon} \mx{\Lambda} \mx{\Upsilon}^{T} \right)^{-1}
= 
\mx{\Phi}^{-1} - \\
\mx{\Phi}^{-1} \mx{\Upsilon}
\left( \mx{\Lambda}^{-1} + \mx{\Upsilon}^T \mx{\Phi}^{-1} \mx{\Upsilon} \right)^{-1}
\mx{\Upsilon}^T \mx{\Phi}^{-1}.
\end{multline}
}
Let $\mx{\Phi} \triangleq \mx{R}_\vc{n}^{-1}$, $\mx{\Lambda} \triangleq \mx{R}_\vc{v}^{-1}$, and $\mx{\Upsilon} \triangleq  \mx{W}^T \mx{G}^T$. Using the above matrix identity, the argument of the trace operation in the objective function of~\eqref{Eq:OptProbVer2} could be simplified as
\ifbool{EqOneColumn}
{
\be
\mx{W}^T \mx{G}^T
	\left( \mx{G} \mx{W} \mx{R}_\vc{n} \mx{W}^T \mx{G}^T + \mx{R}_\vc{v} \right)^{-1}
\mx{G} \mx{W}
&{}={}&
\mx{R}_\vc{n}^{-1}
-
\mx{R}_\vc{n}^{-1}
\left( \mx{W}^T \mx{G}^T \mx{R}_\vc{v}^{-1} \mx{G} \mx{W} + \mx{R}_\vc{n}^{-1} \right)^{-1}
\mx{R}_\vc{n}^{-1}.
\ee
}
{
\begin{multline}
\mx{W}^T \mx{G}^T
\left( \mx{G} \mx{W} \mx{R}_\vc{n} \mx{W}^T \mx{G}^T + \mx{R}_\vc{v} \right)^{-1}
\mx{G} \mx{W}
=
\mx{R}_\vc{n}^{-1}
- \\
\mx{R}_\vc{n}^{-1}
\left( \mx{W}^T \mx{G}^T \mx{R}_\vc{v}^{-1} \mx{G} \mx{W} + \mx{R}_\vc{n}^{-1} \right)^{-1}
\mx{R}_\vc{n}^{-1}.
\end{multline}
}
Hence, the optimization problem defined in~\eqref{Eq:OptProbVer2} can be rewritten as
\be \label{Eq:OptProbVer3}
\begin{aligned}
	& \underset{\mx{W}}{\text{minimize}}
	& &
	\Tr{\mx{R}_\vc{n}^{-1}
		\left( \mx{W}^T \mx{G}^T \mx{R}_\vc{v}^{-1} \mx{G} \mx{W} + \mx{R}_\vc{n}^{-1} \right)^{-1}
		\mx{R}_\vc{n}^{-1}
	} \\
	& \text{subject to}
	& &
	\Tr{ \mx{W} \left( \mx{R}_\vc{\theta} + \mx{R}_\vc{n} \right) \mx{W}^T }
	\, \leq \,
	P_0
\end{aligned}
\ee

Let $\gamma$ be a real scalar such that for any mixing matrix $\mx{W}$
\be
\Tr{\mx{R}_\vc{n}^{-1}
	\left( \mx{W}^T \mx{G}^T \mx{R}_\vc{v}^{-1} \mx{G} \mx{W} + \mx{R}_\vc{n}^{-1} \right)^{-1}
	\mx{R}_\vc{n}^{-1}
} \leq \gamma.
\ee
There exists a symmetric $K$-by-$K$ real matrix $\mx{\Gamma}$ such that~\cite{Rashid12}
\begin{subequations}
\be
\mx{R}_\vc{n}^{-1}
\left( \mx{W}^T \mx{G}^T \mx{R}_\vc{v}^{-1} \mx{G} \mx{W} + \mx{R}_\vc{n}^{-1} \right)^{-1}
\mx{R}_\vc{n}^{-1}
\preceq \mx{\Gamma}
\ee
\vspace{-0.6cm}
\be
\text{and } \; \Tr{\mx{\Gamma}} \leq \gamma
\ee
\end{subequations}
where $\mx{\Phi} \preceq \mx{\Lambda}$ means that the matrix $\mx{\Lambda} - \mx{\Phi}$ is positive semi-definite, denoted as $\mx{\Lambda} - \mx{\Phi} \succeq 0$. In other words,
\be \label{Eq:PDFirst}
\mx{\Gamma}
-
\mx{R}_\vc{n}^{-1}
\left( \mx{W}^T \mx{G}^T \mx{R}_\vc{v}^{-1} \mx{G} \mx{W} + \mx{R}_\vc{n}^{-1} \right)^{-1}
\mx{R}_\vc{n}^{-1}
\succeq 0.
\ee

Based on the Schur's complement theorem~\cite[Page~472]{HornJohnson91}, for any arbitrary matrix $\mx{\Upsilon}$ and any symmetric matrices $\mx{\Phi}$ and $\mx{\Lambda}$ of proper sizes, if $\mx{\Lambda}$ is invertible and $\mx{\Lambda} \succ 0$, then $\mx{\Phi} - \mx{\Upsilon} \mx{\Lambda}^{-1} \mx{\Upsilon}^T \succeq 0$ if and only if
\be
\begin{pmatrix}
	\mx{\Phi} & \mx{\Upsilon} \\
	\mx{\Upsilon}^T &  \mx{\Lambda}
\end{pmatrix}
\succeq 0,
\ee
where $\mx{\Lambda} \succ 0$ means that the matrix $\mx{\Lambda}$ is positive definite. Let $\mx{\Phi} \triangleq \mx{\Gamma}$, $\mx{\Upsilon} \triangleq \mx{R}_\vc{n}^{-1}$, and $\mx{\Lambda} \triangleq \mx{W}^T \mx{G}^T \mx{R}_\vc{v}^{-1} \mx{G} \mx{W} + \mx{R}_\vc{n}^{-1}$. Using the Schur's complement, the condition shown in~\eqref{Eq:PDFirst} is equivalent to the following matrix being positive semi-definite:
\be
\begin{pmatrix}
\mx{\Gamma} & \mx{R}_\vc{n}^{-1} \\
\mx{R}_\vc{n}^{-1} &  \mx{W}^T \mx{G}^T \mx{R}_\vc{v}^{-1} \mx{G} \mx{W} + \mx{R}_\vc{n}^{-1}
\end{pmatrix}
\succeq 0.
\ee

Based on the above discussions, the optimization problem given in~\eqref{Eq:OptProbVer3} is equivalent to the constrained optimization problem defined in~\eqref{Eq:OptProbVer4}, and the proof of Lemma~\ref{Lemma:ReformulateSchur} is concluded.
\end{proof}

The constrained optimization problem defined in~\eqref{Eq:OptProbVer4} is a linear programming with bi-linear matrix-inequality constraints. It could efficiently be solved using numerical solvers such as PENBMI~\cite{PenBmi}, which is fully integrated within the MATLAB$^\text{\textregistered}$ environment through version 3.0 of the YALMIP interface library~\cite{Yalmip}.

\section{Numerical Results}
\label{Sec:NumResults}
In this section, the results of numerical simulations are presented to show the effect of spatial collaboration among sensors on the estimation performance at the FC of a WSN.
Suppose that $K=6$ sensors are randomly and uniformly distributed in the two-dimensional rectangle of $\left[-10,10\right] \times \left[-5,5\right]$,
where $\times$ denotes the Cartesian product of two sets. It is assumed that all sensors are connected to the FC, i.e., $M=K$. Suppose that the covariance between the signals observed by sensors $i$ and $j$ is defined as
\be
\mx{R}_{\vc{\theta}_{i,j}}
\; \triangleq \;
\mathbb{E} \left[\theta_i \theta_j\right]
\; = \;
\sigma_\theta^2 \, \rho_{i,j},
\qquad
i,j=1,2,\dotsc,K,
\ee
where $\sigma_\theta^2$ is the variance of each component of the vector of signals to be estimated $\vc{\theta}$, and $\rho_{i,j}$ is the inter-sensor correlation coefficient that monotonically decreases with the increase of the distance between sensors as
\be
\rho_{i,j}
&{}\triangleq{}&
e^{ - \left(\frac{d_{i,j}}{\beta_1}\right)^{\beta_2} },
\qquad
i,j=1,2,\dotsc,K,
\ee
where $d_{i,j}$ is the distance between sensors $i$ and $j$, $\beta_1 > 0$ is the normalizing factor of the distances, and $0 < \beta_2 \leq 2$ controls the rate of the decay of the correlation coefficients. Note that $\rho_{i,i} = 1$, $i = 1,2,\dotsc,K$. Assume that the vector of observation (channel) noises $\vc{n}$ ($\vc{v}$) is homogeneous and equi-correlated with its covariance matrix defined as
\be
\mx{R}_{\vc{n} (\vc{v})}
&{}={}&
\sigma_{n(v)}^2 \left[ \left(1 - \lambda_{n(v)}\right) \mx{I}_{K(M)} + \lambda_{n(v)} \vc{1} \vc{1}^T \right],
\ee
where $\sigma_{n(v)}^2$ is the variance of each component of the vector of observation (channel) noises $\vc{n}$ ($\vc{v}$), $\lambda_{n(v)}$ is the constant correlation coefficients between each pair of distinct components of $\vc{n}$ ($\vc{v}$), and $\vc{1}$ is the column vector of all ones with appropriate length. In our simulations,
the communication channels between local sensors and the FC are assumed to have unit gain, i.e., $g_j=1$, $j = 1, 2, \dotsc, M$.
The following values are used for the parameters of the system to generate the simulation results presented in this section: $\sigma_\theta^2 = 1$, $\beta_1 = 6$, $\beta_2 = 3$, $\sigma_{n}^2 = 0.1$, $\sigma_{v}^2 = 0.01$, and $\lambda_{n} = \lambda_{v} = 0.1$.

Figure~\ref{Fig:Results} shows the total estimation distortion at the FC, as defined by the objective function of the optimization problem~\eqref{Eq:OptProbVer1},
versus the total average transmit power in the entire network $P_0$ for two network realizations.
Each sensor collaborates with its $q$ closest neighbors by sharing its local noisy observations with them through error-free, low cost links. Note that $q=0$ represents a network without any spatial collaboration, and $q=K-1$ corresponds to a network with full spatial collaboration.
As evident from this figure, even moderate collaboration among sensors could decrease the estimation distortion. The {\em collaboration gain} is more significant when the signals to be estimated have a higher correlation, i.e., the sensors observing them are located more closely, as depicted in Network 2.


\setlength{\belowcaptionskip}{-10pt}
\begin{figure}[!t]
	\centering
	\includegraphics[width=0.86\linewidth]{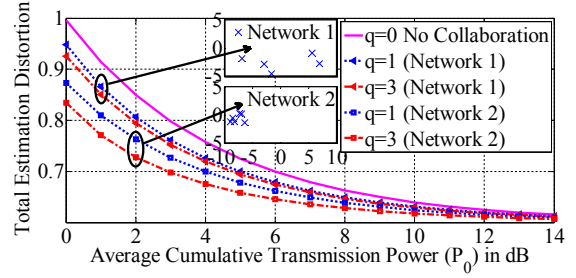}
	\vspace{-0.25cm}
	\caption{Total estimation distortion at the FC versus the average cumulative transmission power for different degrees of spatial collaboration within two random network realizations.
	}
	\label{Fig:Results}
\end{figure}

\section{Conclusions}
\label{Sec:Conclusions}
In this paper, we studied the effect of spatial collaboration on the performance of the BLUE estimator at the FC of a WSN that tries to estimate the vector of spatially correlated signals observed by sensors, rather than the well-studied case of estimating a parameter correlated with the local observations. An optimal linear spatial-collaboration scheme was derived that minimizes the sum of the estimation variances for different signals observed by the network, given a constraint on the average cumulative transmission power. The numerical results showed that even a small degree of connectivity and spatial collaboration in the network could improve the quality of the estimators at the FC.

\newpage
\bibliographystyle{IEEEtran}
\bibliography{Refs}

\begin{thebibliography}{10}
\providecommand{\url}[1]{#1}
\csname url@samestyle\endcsname
\providecommand{\newblock}{\relax}
\providecommand{\bibinfo}[2]{#2}
\providecommand{\BIBentrySTDinterwordspacing}{\spaceskip=0pt\relax}
\providecommand{\BIBentryALTinterwordstretchfactor}{4}
\providecommand{\BIBentryALTinterwordspacing}{\spaceskip=\fontdimen2\font plus
\BIBentryALTinterwordstretchfactor\fontdimen3\font minus
  \fontdimen4\font\relax}
\providecommand{\BIBforeignlanguage}[2]{{%
\expandafter\ifx\csname l@#1\endcsname\relax
\typeout{** WARNING: IEEEtran.bst: No hyphenation pattern has been}%
\typeout{** loaded for the language `#1'. Using the pattern for}%
\typeout{** the default language instead.}%
\else
\language=\csname l@#1\endcsname
\fi
#2}}
\providecommand{\BIBdecl}{\relax}
\BIBdecl

\bibitem{Bahceci08}
I.~Bah\c{c}eci and A.~Khandani, ``Linear estimation of correlated data in
  wireless sensor networks with optimum power allocation and analog
  modulation,'' \emph{IEEE Transactions on Communications}, vol.~56, no.~7, pp.
  1146--1156, July 2008.

\bibitem{Fang09}
J.~Fang and H.~Li, ``Power constrained distributed estimation with correlated
  sensor data,'' \emph{IEEE Transactions on Signal Processing}, vol.~57, no.~8,
  pp. 3292--3297, August 2009.

\bibitem{Rashid12}
U.~Rashid, H.~D. Tuan, P.~Apkarian, and H.~H. Kha, ``Globally optimized power
  allocation in multiple sensor fusion for linear and nonlinear networks,''
  \emph{IEEE Transactions on Signal Processing}, vol.~60, no.~2, pp. 903--915,
  February 2012.

\bibitem{Xiao06}
J.-J. Xiao, S.~Cui, Z.-Q. Luo, and A.~J. Goldsmith, ``Power scheduling of
  universal decentralized estimation in sensor networks,'' \emph{IEEE
  Transactions on Signal Processing}, vol.~54, no.~2, pp. 413--422, February
  2006.

\bibitem{Cui07Diversity}
S.~Cui, J.-J. Xiao, A.~J. Goldsmith, Z.-Q. Luo, and H.~V. Poor, ``Estimation
  diversity and energy efficiency in distributed sensing,'' \emph{IEEE
  Transactions on Signal Processing}, vol.~55, no.~9, pp. 4683--4695, September
  2007.

\bibitem{Fanaei2013Asilomar}
M.~{Fanaei}, M.~C. {Valenti}, and N.~A. {Schmid}, ``{Limited-feedback-based
  channel-aware power allocation for linear distributed estimation},'' in
  \emph{Proceedings of Asilomar Conference on Signals, Systems, and Computers},
  Pacific Grove, CA, November 2013.

\bibitem{Xiao08}
J.-J. Xiao, S.~Cui, Z.-Q. Luo, and A.~J. Goldsmith, ``Linear coherent
  decentralized estimation,'' \emph{IEEE Transactions on Signal Processing},
  vol.~56, no.~2, pp. 757--770, February 2008.

\bibitem{Banavar10}
M.~K. Banavar, C.~Tepedelenlio{\u{g}}lu, and A.~Spanias, ``Estimation over
  fading channels with limited feedback using distributed sensing,'' \emph{IEEE
  Transactions on Signal Processing}, vol.~58, no.~1, pp. 414--425, January
  2010.

\bibitem{Fanaei2013Milcom}
M.~{Fanaei}, M.~C. {Valenti}, and N.~A. {Schmid}, ``Power allocation for
  distributed {BLUE} estimation with full and limited feedback of {CSI},'' in
  \emph{Proceedings of Military Communications Conference (MILCOM)}, San Diego,
  CA, November 2013.

\bibitem{Chaudhary13}
M.~H. Chaudhary and L.~Vandendorpe, ``Performance of power-constrained
  estimation in hierarchical wireless sensor networks,'' \emph{IEEE
  Transactions on Signal Processing}, vol.~61, no.~3, pp. 724--739, February
  2013.

\bibitem{Xiao05}
J.-J. Xiao and Z.-Q. Luo, ``Decentralized estimation in an inhomogeneous
  sensing environment,'' \emph{IEEE Transactions on Information Theory},
  vol.~51, no.~10, pp. 3564--3575, October 2005.

\bibitem{Fanaei2012}
M.~{Fanaei}, M.~C. {Valenti}, N.~A. {Schmid}, and M.~M. {Alkhweldi},
  ``{Distributed parameter estimation in wireless sensor networks using fused
  local observations},'' in \emph{Proceedings of SPIE Wireless Sensing,
  Localization, and Processing VII}, vol. 8404, Baltimore, MD, May 2012.

\bibitem{RibeiroGiannakis06a}
A.~Ribeiro and G.~B. Giannakis, ``Bandwidth-constrained distributed estimation
  for wireless sensor networks--{P}art {I}: Gaussian case,'' \emph{IEEE
  Transactions on Signal Processing}, vol.~54, no.~3, pp. 1131--1143, March
  2006.

\bibitem{Ishwar2005}
P.~Ishwar, R.~Puri, K.~Ramchandran, and S.~Pradhan, ``On rate-constrained
  distributed estimation in unreliable sensor networks,'' \emph{IEEE Journal on
  Selected Areas in Communications}, vol.~23, no.~4, pp. 765--775, April 2005.

\bibitem{Kar13}
S.~Kar and P.~Varshney, ``Linear coherent estimation with spatial
  collaboration,'' \emph{IEEE Transactions on Information Theory}, vol.~59,
  no.~6, pp. 3532--3553, June 2013.

\bibitem{Kay93}
S.~M. Kay, \emph{Fundamentals of Statistical Signal Processing: Estimation
  Theory}, 1st~ed.\hskip 1em plus 0.5em minus 0.4em\relax NJ: Prentice Hall,
  1993.

\bibitem{HornJohnson91}
R.~A. Horn and C.~R. Johnson, \emph{Matrix Analysis}.\hskip 1em plus 0.5em
  minus 0.4em\relax Cambridge University Press, 1991.

\bibitem{PenBmi}
M.~Ko{\v{c}}vara and M.~Stingl, ``{PENBMI},'' Version 2.1, 2004, {S}ee
  \url{www.penopt.com} for a free developer version.

\bibitem{Yalmip}
J.~L{\"{o}}fberg, ``{YALMIP: A} toolbox for modeling and optimization in
  {MATLAB$^\text{\textregistered}$},'' in \emph{Proceedings of IEEE
  International Symposium on Computer Aided Control Systems Design (CACSD)},
  Taipei, Taiwan, September 2004.

\end{thebibliography}


\end{document}